\documentclass[11pt]{article}
\usepackage[margin=1in]{geometry}
\usepackage[colorlinks=true, allcolors=blue]{hyperref}

\usepackage[usenames,dvipsnames]{xcolor}
\usepackage{amsmath, booktabs}
\usepackage{complexity}
\usepackage{amsthm}
\usepackage{tikz}
\usetikzlibrary{shapes}
\usetikzlibrary{positioning}
\usetikzlibrary{quotes, calc}
\usetikzlibrary{arrows.meta}
\usepackage{subcaption}

\newcommand{\symDegree}[0]{\Delta^{*}}
\tikzset{
  filled vertex/.style = {circle,draw=blue,fill=black!50,inner sep=1.5pt},
  empty vertex/.style = {circle, draw, fill = white, inner sep=1.5pt, minimum width=1.5pt},
  corner/.style  = {fill=blue,inner sep=2pt},
  selected solution/.style = {fill=RubineRed, draw=blue, inner sep=2pt},
  unselected solution/.style = {draw=blue, diamond, fill=cyan, inner sep=1.5pt}
}

\theoremstyle{plain} \newtheorem{theorem}{Theorem}
\newtheorem{lemma}[theorem]{Lemma}
\newtheorem{proposition}[theorem]{Proposition}
\newtheorem{corollary}[theorem]{Corollary}

\title{Feedback Set Problems on Bounded-Degree (Planar) Graphs}
\author{
  Tian Bai\thanks{School of Computer Science and Engineering, University of Electronic Science and Technology of China, Chengdu, China. {\tt myxiao@uestc.edu.cn, tianbai@std.uestc.edu.cn}.}
  \and
  Yixin Cao\thanks{Department of Computing, Hong Kong Polytechnic University, Hong Kong, China.  {\tt yixin.cao@polyu.edu.hk}.
  This work was done while Yixin Cao was visiting Tian Bai and Mingyu Xiao at the University of Electronic Science and Technology of China.
  } 
  \and Mingyu Xiao\footnotemark [1]
}

\begin{document}

\maketitle

\begin{abstract}
  The feedback set problems are about removing the minimum number of vertices or edges from a graph to break all its cycles. Much effort has gone into understanding their complexity on planar graphs as well as on graphs of bounded degree. We obtain a complete complexity classification for these problems on bounded-degree digraphs, including the planar case.
  In particular, we show that both problems are  $\NP$-complete on digraphs of maximum degree three, while on planar digraphs the feedback vertex set problem is polynomial-time solvable when each vertex has either indegree at most one or outdegree at most one, and  $\NP$-complete otherwise.
  We also give tight degree bounds for the connected feedback vertex set problem on undirected graphs, both planar and non-planar. We close the paper with a historical account of results for feedback vertex set on undirected graphs of bounded degree.
    
\end{abstract}

\section{Introduction}

In the feedback vertex set problem, we are given a graph $G$ and an integer $k$, and are asked whether there exists a set of at most $k$ vertices whose removal from $G$ makes the graph acyclic.
The input graph may be undirected or directed. There is also an edge/arc version, in which one deletes edges in undirected graphs or arcs in digraphs.
We refer to these collectively as feedback set problems, and their solutions are \emph{feedback vertex sets}, \emph{feedback edge sets}, and \emph{feedback arc sets}.
For undirected graphs, the edge version is easy, because one only needs to keep a spanning forest.
All the remaining variants are $\NP$-complete, and both directed variations appear in Karp's list~\cite{karp-72-reduction}.

The classic reduction for the feedback vertex set problem on digraphs, which was attributed to the Algorithms Seminar at Cornell University, was from the vertex cover problem~\cite{karp-72-reduction}.
In the vertex cover problem, we are given a graph $G$ and an integer $k$, and asked whether there exists a set of at most $k$ vertices whose removal from $G$ makes an edgeless graph.
Since parallel edges and directions do not impact the vertex cover problem, we may always assume that $G$ is a simple undirected graph.
In this reduction, we
\begin{quote}
  replace each edge of $G$ with two arcs of opposite directions.
\end{quote}
Since these two arcs already form a cycle, any feedback vertex set of the resulting digraph must contain one of the two ends, hence a vertex cover of $G$.
The second reduction, devised by Lawler and Karp, reduces the feedback vertex set problem to the feedback arc set problem on digraphs:
\begin{quote}
  replace each vertex $v$ with two new vertices $v^-$ and $v^+$, connected by an arc $v^- v^+$, and reassign all the incoming arcs toward $v$ to $v^-$ and all the outgoing arcs from $v$ to $v^+$.
\end{quote}
In the new digraph, each new vertex has either indegree at most one or outdegree at most one.
In terms of breaking directed cycles, removing the arc $v^- v^+$ has the same effect as removing either $v^-$ or $v^+$ in the new digraph, which is the same as removing $v$ from the original digraph.
Throughout this paper, we will refer to these two constructions as the \emph{doubling operation} and the \emph{splitting operation}.
We can easily adapt the doubling operation to the feedback vertex set problem on undirected graphs.
Instead of two arcs, we use two parallel edges between the same pair of adjacent vertices.\footnote{If simple graphs are demanded, we may subdivide one of the pair of edges; or equivalently, for each edge $u v$, we introduce a new vertex $x$ and make it connected to $u$ and $v$ only.}

Since feedback set problems have wide applications, a natural question is their complexity on special graph classes.
Much attention has been given to planar graphs.  A graph is \emph{planar} if it can be drawn in the plane so that edges intersect only at common endpoints.
Indeed, in the very first systematic study of the feedback set problems, Younger~\cite{younger-63-thesis} conjectured that the feedback arc set problem can be solved in polynomial time on planar graphs.
This was later confirmed by Lucchesi and Younger~\cite{lucchesi-76-thesis, lucchesi-78-minimax}; see also \cite{lovasz-76-minimax}.
In contrast, the feedback vertex set problem remains  $\NP$-hard on planar graphs because the vertex cover problem is  $\NP$-hard on planar graphs~\cite{garey-77-rectilinear-steiner-tree}, and the doubling operation preserves planarity. (The splitting operation does not generally preserve planarity; see more discussions below.)

Another natural question on the feedback set problems is to identify the threshold between tractability and intractability, with respect to the maximum degree of the input graph.
It has been observed that the vertex degrees are intrinsic to this problem~\cite{cao-18-ufvs}.
We use $d(v)$ to denote the degree of a vertex $v$, i.e., the total number of edges/arcs incident to $v$, and the maximum degree of the graph is 
\[
  \Delta(G) = \max_{v\in V(G)} d(v).
\]
On the one hand, all variants are polynomial-time solvable when the maximum degree is at most two.
On the other hand, since the vertex cover problem remains $\NP$-complete on planar graphs of maximum degree three~\cite{garey-77-rectilinear-steiner-tree}, the reduction by the doubling operation implies that the feedback vertex set problem remains $\NP$-complete on planar graphs, directed or not, of maximum degree six (see Section~\ref{sec:digraphs}).
These simple observations left a small gap, for which the literature contains incomplete and occasionally inaccurate claims.

\begin{figure}[h]
  \centering
  \begin{subfigure}[b]{.35\linewidth}
    \centering
    \begin{tikzpicture}
      \draw[fill = gray!30, draw = gray, rounded corners] (-2.2, 3.5) rectangle (2.2, .25);
      \node [unselected solution, "$v_b$" below] (o) at (0, 1) {};
      \foreach [count=\i] \x in {-1, 1} {
        \foreach \j in {1, 2} {
          \node [empty vertex] (t\x\j) at (1.5*\x/\j, 3) {};
        }

        \node [selected solution] (b\x) at (1.5*\x, 1) {};
        \node [empty vertex] (e\x) at (.75*\x, 1.5) {};
        \draw (b\x) -- (t\x 1) -- (t\x 2) -- (b\x);
        \draw (o) -- (b\x) -- (e\x) -- (o);
        \foreach \j in {-1, 1} {
          \draw[dashed] (t\x 2) -- ++(.15*\j, 1);
        }
        \draw[dashed] (t\x 1) -- ++(0, 1);        
}
      \draw (t-11) -- (t11);
      \foreach \x/\i in {t-12/2, t12/4} {
        \node[unselected solution, "$v_{\i}$" below] at (\x) {};
      }
      \foreach \x/\i in {b-1/a, b1/c} {
        \node[selected solution, "$v_{\i}$" below] at (\x) {};
      }
      \foreach \x/\i in {-1/1, 1/5} {
        \node[] at (1.85*\x, 3) {$v_{\i}$};
      }
    \end{tikzpicture}
    \caption{}
  \end{subfigure}  
  \quad
  \begin{subfigure}[b]{.35\linewidth}
    \centering
    \begin{tikzpicture}
      \draw[fill = gray!30, draw = gray, rounded corners] (-2.2, 3.5) rectangle (2.2, .25);
      \node [unselected solution, "$v_b$" below] (o) at (0, 1) {};
      \foreach [count=\i] \x in {-1, 1} {
        \foreach \j in {1, 2} {
          \node [empty vertex] (t\x\j) at (1.5*\x/\j, 3) {};
          \draw[dashed] (t\x\j) -- ++(0, 1);
        }

        \node [selected solution] (b\x) at (1.5*\x, 1) {};
        \node [empty vertex] (e\x) at (.75*\x, 1.5) {};
        \draw (b\x) -- (t\x 1) -- (t\x 2) -- (b\x);
        \draw (o) -- (b\x) -- (e\x) -- (o);
}
      \draw (t-11) -- (t11);
      \foreach \x/\i in {t-12/2, t12/4} {
        \node[unselected solution, "$v_{\i}$" below] at (\x) {};
      }
      \foreach \x/\i in {b-1/a, b1/c} {
        \node[selected solution, "$v_{\i}$" below] at (\x) {};
      }
      \foreach \x/\i in {-1/1, 1/5} {
        \node[] at (1.85*\x, 3) {$v_{\i}$};
      }
      \node [empty vertex, "$v_3$" below] (v3) at (0, 3) {};
      \draw[dashed] (v3) -- ++(0, 1);
    \end{tikzpicture}
    \caption{}
  \end{subfigure}  
  \caption{Speckenmeyer's gadgets (a) for a degree-six vertex and (b) for a degree-five vertex \cite{speckenmeyer-83-thesis-fvs}.
    The edges incident to the original vertex are assigned to the dashed lines in cyclic order.
    Note that a minimum feedback vertex set contains at least two vertices from each gadget (which have to be $v_a$ and $v_c$ if only two), and at most three vertices (no cycle visits any remaining vertex after vertices $v_2$, $v_4$, and $v_b$ removed).
  }
  \label{fig:speckenmeyer}
\end{figure}

For undirected graphs, Speckenmeyer and two other groups~\cite{ueno-88-cubic-fvs, furst88} closed the gap in the 1980s.
In his Ph.D. thesis~\cite{speckenmeyer-83-thesis-fvs}, Speckenmeyer devised two simple gadgets, reproduced in Figure~\ref{fig:speckenmeyer} (with a slight modification), to replace vertices of degree six and five.
As a result, the feedback vertex set problem remains  $\NP$-complete for undirected planar graphs of maximum degree at most four.
It is a nice exercise to reduce the feedback vertex set problem on undirected graphs of maximum degree three to the same problem on \emph{undirected cubic graphs}, i.e., undirected graphs in which every vertex has degree three.
Speckenmeyer~\cite{speckenmeyer-83-bound-cubic-fvs,speckenmeyer-88-fvs-and-nsis} observed that the feedback vertex set problem and the connected vertex cover problem, where the solution is required to induce a connected subgraph, are equivalent on undirected cubic graphs.
This completes the classification for undirected graphs since the latter problem can be solved in polynomial time on undirected cubic graphs \cite{ueno-88-cubic-fvs, furst88}.\footnote{See
the appendix
for a more detailed historical account of these results.}

The situation for digraphs is murkier.
It is natural to attempt to adapt Speckenmeyer's gadgets to planar digraphs.
However, while we do not need to distinguish edges incident to a vertex in an undirected graph, the arcs in a digraph have directions.
As we will see, the application of Speckenmeyer's approach crucially relies on the \emph{embedding}, i.e., how the graph is drawn on the plane.
It seems unlikely that one can design a planar replacement gadget for a vertex whose incoming and outgoing arcs alternate around the embedding.
This obstruction is reminiscent of the non-planarity of $K_{3, 3}$.
Recall that the reduction by the doubling operation starts from the vertex cover problem on planar cubic graphs.
For each pair of opposite arcs between the same pair of vertices, we can switch their order independently from other pairs.
In other words, we can assign any orientation to the cycle of length two formed by them.
If we can arrange the embedding such that the three cycles of each vertex are in different orientations, i.e., one or two clockwise and the other(s) anticlockwise, we can handle them.
In particular, for each vertex in such an embedding, two incoming arcs are consecutive, and two outgoing arcs are consecutive.
We can deal with them in the same spirit as the grouping of vertices used in Figure~\ref{fig:speckenmeyer}(a).

For planar digraphs of maximum degree three, the feedback vertex set problem and the feedback arc set problem turn out to be equivalent.
We can delete vertices with only incoming arcs or only outgoing arcs because they do not lie on any directed cycles.
Every remaining vertex then has either indegree one or outdegree one, which makes the problem equivalent to feedback arc set.
Since planar feedback arc set is polynomial-time solvable, so is planar feedback vertex set in this case.

The polynomial-time algorithm for planar digraphs of maximum degree three does not apply to non-planar digraphs of maximum degree three, which is $\NP$-complete by a simple reduction. (This is different from undirected graphs.)
After the splitting operation is applied to a digraph, we further split vertices $v^-$ (resp., $v^+$) into a path and let each vertex on the path receive one incoming (resp., outgoing) arc to $v$.
The same reduction applies to the feedback arc set problem.
This reduction is surprisingly simple, and more surprising is that it has not been documented, to the best of our knowledge.
It is stronger than claims in the literature, e.g., \cite{garey-79}.

Table~\ref{tbl:main-results} summarizes the resulting complexity classification. 
For planar digraphs, the relevant parameter is slightly more refined than the maximum degree.
For a vertex $v$ of a digraph $D$, let~$d^-(v)$ and $d^+(v)$ denote  its indegree and outdegree, respectively.
Let~$d(v) = d^-(v) + d^+(v)$ and
\[
  \symDegree(D) = \max_{v\in V(D)} \min\{ d^-(v), d^+(v) \}.
\]
Thus, $\symDegree(D) \le 1$ means that every vertex has either indegree at most one or outdegree at most one.
In this setting, the minimum feedback vertex sets and minimum feedback arc sets have the same cardinality.
This is precisely the idea behind the splitting operation.  While $\symDegree(D)$ is not useful for general digraphs, it captures the tractability threshold in the planar case.
Note that $\Delta(D) \le 3$ implies $\symDegree(D) \le 1$, while $\symDegree(D) \ge 2$ implies $\Delta(D) \ge 4$.
All the hardness results remain even if the graphs are required to be bipartite, as we can always subdivide every edge or arc, which has no impact on the minimum solutions.

\begin{table}[h]
  \centering
  \caption{The complexity landscape of the feedback set problems on graphs of bounded degree.
  }
  \label{tbl:main-results}
  \begin{tabular}{r| l r | l r}
    \toprule
    & \multicolumn{2}{c|}{$\P$} & \multicolumn{2}{c}{$\NPC$}
    \\
    \midrule
    Undirected, vertex & $\Delta \le 3$ & \cite{speckenmeyer-88-fvs-and-nsis, ueno-88-cubic-fvs, furst88} & $\Delta \ge 4$ & \cite{speckenmeyer-83-thesis-fvs} 
    \\
    Planar undirected, vertex & $\Delta \le 3$ & \cite{speckenmeyer-88-fvs-and-nsis, ueno-88-cubic-fvs, furst88} & $\Delta \ge 4$ & \cite{speckenmeyer-83-thesis-fvs} 
    \\
    \midrule
    Directed, vertex & $\Delta \le 2$ & Proposition~\ref{lem:digraph-maximum-two} & $\Delta \ge 3$ & Theorem~\ref{thm:degree-bounded} 
    \\
    Directed, arc & $\Delta \le 2$ & Proposition~\ref{lem:digraph-maximum-two} & $\Delta \ge 3$ & Theorem~\ref{thm:degree-bounded}
    \\
    Planar directed, vertex  & $\symDegree \le 1$ & Corollary~\ref{cor:planar-digraph} & $\symDegree \ge 2$ & Theorem~\ref{PDFVS：NPC}
    \\
    Planar directed, arc & always & \cite{lucchesi-76-thesis, lucchesi-78-minimax} & &
    \\    \bottomrule 
  \end{tabular}
\end{table}

In passing let us remark that
Garey and Johnson~\cite{garey-79} already mentioned bounds on degrees for the feedback set problems on planar digraphs.
However, they provided neither proofs nor references for these claims.
We do not know what arguments they had in mind.
In particular, our proof of Theorem~\ref{thm:degree-bounded} does not appear to follow directly from their statements, since they listed different degree bounds for the vertex and arc versions.
They only claimed that the feedback arc set problem is $\NP$-complete for digraphs with indegree and outdegree both at most three.

Additionally, we settle the complexity of the connected variation of the feedback vertex set problem on undirected graphs.
In the connected feedback vertex set problem, the solution is required to induce a connected subgraph.
While it is trivial on an undirected graph of maximum degree two, it is already $\NP$-hard on planar undirected graphs of maximum degree eight, which follows from the same reduction above (by doubling edges) and the fact that the connected vertex cover problem remains $\NP$-hard even for undirected planar graphs with maximum degree four~\cite{garey-77-rectilinear-steiner-tree}.
The main difficulty in narrowing the gap is to design a low-degree gadget that preserves both planarity and the connectivity requirement on the solution.
Revisiting the gadget of Bodlaender et al.~\cite{comgeoBodlaenderFGPSW09} on the weighted version of this problem, we show that the problem is already $\NP$-complete on planar undirected graphs of maximum degree three.

The vertex set and edge/arc set of a graph $G$ are denoted by, respectively, $V(G)$ and $E(G)$.
Since membership in NP is immediate for all problems considered, we focus on hardness proofs.

\section{Digraphs of bounded degree}
\label{sec:digraphs}

We begin with the easy case of maximum degree two.
We may remove all vertices that have no incoming arcs or no outgoing arcs: they are not visited by any directed cycle.
The remaining graph consists of disjoint cycles, and it suffices to take one vertex or arc from each cycle.

\begin{proposition}\label{lem:digraph-maximum-two}
  Both feedback set problems can be solved in polynomial time on digraphs of maximum degree at most two.
\end{proposition}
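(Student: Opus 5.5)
The plan is to make the preceding paragraph rigorous: iteratively prune the digraph, show that what survives is a disjoint union of directed cycles, and then read off an optimal solution for both the vertex and the arc version.

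\textbf{Pruning.} Let $D$ be a digraph with $\Delta(D)\le 2$. Call a vertex a \emph{source/sink} if $d^-(v)=0$ or $d^+(v)=0$. No directed cycle passes through such a vertex, since a vertex on a directed cycle needs both an incoming and an outgoing arc on that cycle. Hence deleting it, together with its incident arcs, changes neither the set of directed cycles nor any feedback vertex set or feedback arc set restricted to the remaining graph. Deleting a vertex only lowers degrees, so new sources/sinks may appear. I repeat the deletion until none remain; this takes at most $|V(D)|$ rounds.

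\textbf{Structure of the survivor $D'$.} Every vertex of $D'$ satisfies $d^-(v)\ge1$ and $d^+(v)\ge1$. Since $d(v)\le 2$, this forces $d^-(v)=d^+(v)=1$. The underlying graph is therefore $2$-regular, allowing $2$-cycles formed by opposite arcs. Its components are cycles, and because every vertex has exactly one in-arc and one out-arc, each component is a consistently oriented directed cycle.

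The one point needing care is that components are disjoint. Loops, if allowed, count as degree two and form trivial components. A pair of antiparallel arcs forms one $2$-cycle component. Two parallel arcs in the same direction cannot occur, because then one endpoint would have outdegree $2$ and indegree $0$, and pruning would already have removed it.

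\textbf{Optimal solutions.} Let $c$ be the number of components of $D'$. The directed cycles of $D'$ are exactly these $c$ components, and they are pairwise vertex- and arc-disjoint. Any feedback vertex set or feedback arc set must therefore contain at least $c$ elements, one from each component. Conversely, choosing one arbitrary vertex (respectively arc) from each component destroys every directed cycle of $D'$, and hence of $D$.

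So both optimum values equal $c$. Computing $c$ takes linear time via connected components, which proves the proposition. I expect no real obstacle here; the only care required is in the pruning argument and in handling loops and multi-arcs, as done above.
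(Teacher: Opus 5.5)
Your proof is correct and is essentially the paper's argument: delete vertices with no incoming or no outgoing arcs, observe that what remains is a disjoint union of directed cycles, and take one vertex or arc from each cycle. You make the pruning explicitly iterative and handle loops and multi-arcs, which the paper's one-line sketch leaves implicit.
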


On the other hand, the following na\"{i}ve reduction suffices to pin down the hardness when there are vertices of degree three.
The other end of an incoming (resp., outgoing) arc to $v$ is an \emph{in-neighbor} (resp., \emph{out-neighbor}) of $v$.
We have seen that splitting a vertex $v$ into $v^-$ and $v^+$, with an arc from $v^-$ to $v^+$, separates in-neighbors and out-neighbors of $v$.
If~$v$ had more than two in-neighbors, we would further split $v^-$ into multiple vertices so that each receives one incoming arc.
We process the vertex $v^+$ in a similar way.
We show that each original vertex~$v$ is represented by a path gadget~$P(v)$, and that choosing~$v$ in the original instance corresponds to cutting the unique ``bridge'' between the incoming and outgoing parts of~$P(v)$.  This preserves both the existence and the size of minimum solutions.
This reduction works for both the vertex and arc variations and leads to the same bound on total degrees. 

\begin{figure}[h]
  \centering
  \begin{subfigure}[b]{.4\linewidth}
    \centering
    \begin{tikzpicture}
      [
      > = {Stealth[width = 3.7pt, length = 6pt]},
      ]
      \node[filled vertex, "$v$"] (v) at(0, 1.25) {};
      \foreach \y/\il/\ir in {0/1/1, 1/2/2, 2/3/3, 5/d^-(v)/d^+(v)} {
        \node[empty vertex, "$x_{\il}$" left] (x\y) at(-1.25, {(5-\y)/2}) {};
        \node[empty vertex, "$y_{\ir}$" right] (y\y) at(1.25, {(5-\y)/2}) {};
        \draw[->] (x\y) edge (v) (v) edge (y\y);
      }
      \foreach \i in {-1, 1}
      \node at(1.5*\i, .9) {$\vdots$};
    \end{tikzpicture}
    \caption{}
  \end{subfigure}  
  \quad
  \begin{subfigure}[b]{.4\linewidth}
    \centering
    \begin{tikzpicture}
      [
      > = {Stealth[width = 3.7pt, length = 6pt]},
      ]
      \draw[fill = gray!30, draw = gray, rounded corners] (-1.75, 3) rectangle (1.75, -.5);
\foreach \y/\il/\ir in {0/1/1, 1/2/2, 2/3/3, 5/d^-(v)/d^+(v)} {
        \node[empty vertex, "$v^-_{\il}$" right] (x-\y) at(-1.25, {(5-\y)/2}) {};
        \node[empty vertex, "$v^+_{\ir}$" left] (y-\y) at(1.25, {(5-\y)/2}) {};
\draw[<-, dashed] (x-\y) -- ++(-1, 0);
        \draw[->, dashed] (y-\y) -- ++(1, 0);
      }
      \foreach \l in {x, y} {
        \draw[->] (\l-0) edge (\l-1) (\l-1) edge (\l-2) (\l-2) edge ++ (0, -.4);
        \draw[<-] (\l-5) edge ++ (0, .4);
      }
      \foreach \i in {-1, 1} {
        \node at(1.25*\i, .9) {$\vdots$};
        \node at(2.25*\i, .9) {$\vdots$};
      }
      \foreach \x in {x-5, y-0}
      \node[filled vertex] at (\x) {};
      \draw[->, thick] (x-5) -- (y-0);
    \end{tikzpicture}
    \caption{}
  \end{subfigure}  
  \caption{A vertex $v$ in (a) is replaced by a directed path on $d(v)$ new vertices in (b).
}
  \label{fig:degree-reduction}
\end{figure}

\begin{theorem}\label{thm:degree-bounded}
  Both feedback set problems remain $\NP$-complete for digraphs of maximum degree at most three.
\end{theorem}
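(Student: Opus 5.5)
We reduce from the feedback vertex set problem on general digraphs, which is $\NP$-complete by Karp's reduction from vertex cover (the doubling operation). Given a digraph $D$ and integer $k$, we build $D'$ by replacing every vertex $v$ with the path gadget $P(v)$ of Figure~\ref{fig:degree-reduction}. This gadget is a directed path
\[
v^-_1 \to v^-_2 \to \cdots \to v^-_{d^-(v)} \to v^+_1 \to \cdots \to v^+_{d^+(v)}.
\]
Each arc $uv$ of $D$ is redirected to go from some $v^+$-vertex of $P(u)$ to some $v^-$-vertex of $P(v)$, so that every $v^-_i$ receives exactly one external incoming arc and every $v^+_j$ sends exactly one external outgoing arc. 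Thus each new vertex has degree at most three: one path-in, one path-out and one external arc, with endpoints of the path having fewer. If $d^-(v)=0$ or $d^+(v)=0$, then $v$ lies on no cycle and we simply delete it beforehand. We call the arc $b_v = v^-_{d^-(v)} v^+_1$ the \emph{bridge} of $P(v)$. The same $D'$ with the same $k$ serves for both the vertex and the arc version, which also covers the arc version of hardness.

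The key structural claim is a correspondence between cycles. Every directed cycle of $D'$ entering $P(v)$ does so at some $v^-_i$, must follow the path through $b_v$, and leaves from some $v^+_j$. Hence contracting each $P(v)$ maps directed cycles of $D'$ onto closed walks of $D$ through the same vertices, and conversely every cycle $C$ of $D$ lifts to a cycle of $D'$ using exactly the bridges $b_v$ for $v\in V(C)$.

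Given a feedback vertex set $S$ of $D$, the set $\{b_v : v\in S\}$ is a feedback arc set of $D'$. Likewise, $\{v^-_{d^-(v)} : v\in S\}$ (or equivalently $v^+_1$) is a feedback vertex set of $D'$, since every cycle through $P(v)$ traverses the bridge. Conversely, given a solution $S'$ of $D'$ of either type, we map each element lying inside $P(v)$ (a vertex of $P(v)$ or an arc of its path), or each external arc $uv$, to the vertex $v$. For an external arc we may also choose $u$; either works, since the arc's cycles all pass through both gadgets. The resulting set $S$ has $|S|\le|S'|$. If some cycle $C$ of $D$ avoided $S$, its lift in $D'$ would use only gadgets $P(v)$ with $v\notin S$ and arcs between them, so it would avoid $S'$, a contradiction.

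The main point needing care is this backward direction. We must ensure the lift of $C$ touches only elements that map into $V(C)$, including the external arcs, which are assigned to their head. This follows from the lift construction: it uses the path segments inside $P(v)$ for $v\in V(C)$ and external arcs whose heads are in $V(C)$. The reduction is clearly polynomial and membership in $\NP$ is immediate, which completes the proof for both problems.
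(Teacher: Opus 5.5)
Your proof is correct and follows the paper's approach: you use the same path gadget $P(v)$ with the bridge $v^-_{d^-(v)}v^+_1$, contract cycles of $D'$ to cycles of $D$, and lift cycles of $D$ back through the gadgets. The only difference is in the converse. The paper normalizes a minimum solution of $D'$ to the form $\{v^+_1 \mid v\in S\}$, whereas you project an arbitrary solution onto $V(D)$; this also treats external arcs in the arc version explicitly, a case the paper leaves implicit.
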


\begin{proof}
  Let $D$ be a digraph.
We transform $D$ into another digraph $D'$ by performing  the following operations on each vertex $v$ of $D$.
  We introduce a directed path, denoted as $P(v)$, on $d(v)$ new vertices:
  \[
    v^{-}_{1} v^{-}_{2} \cdots v^{-}_{d^{-}(v)}
    v^{+}_{1} v^{+}_{2} \cdots v^{+}_{d^{+}(v)}.
  \]

  Let the in-neighbors and out-neighbors of $v$ be numbered in an arbitrary way as $\{x_{1}, x_{2}, \ldots, x_{d^{-}(v)}\}$ and
  $\{y_{1}, y_{2}, \ldots, y_{d^{+}(v)}\}$, respectively.
  It is possible that $x_i = y_j$ for some $i$ and $j$ with $1\le i\le d^{-}(v)$ and $1\le j\le d^{+}(v)$.
  For $i = 1, 2, \ldots, d^{-}(v)$, we add an arc $x_{i} v^{-}_{i}$, and for $i = 1, 2, \ldots, d^{+}(v)$, we add an arc $v^{+}_{i} y_{i}$.  
  See Figure~\ref{fig:degree-reduction} for an illustration.
  Let $D'$ denote the resulting digraph.

  If we remove the vertex $v^{+}_{1}$ from the gadget $P(v)$, then no cycle can visit the remaining vertices of this gadget.  Thus, a minimum feedback vertex set of $D'$ contains at most one vertex from a gadget.  Moreover, if a minimum feedback vertex set of $D'$ contains a vertex from a gadget $P(v)$, we can always replace it with $v^{+}_{1}$.
  We claim that a set $S\subseteq V(D)$ is a feedback vertex set of $D$ if and only if 
  \[
    S' = \{v^{+}_1\mid v\in S\}
  \]
  is a feedback vertex set of $D'$.

  For necessity, suppose for contradiction that $C'$ is a cycle in $D' - S'$.
  Since each gadget is acyclic, the cycle~$C'$ visits vertices from more than one gadget.
  Moreover, if $C'$ visits vertices on $P(v)$, then $C'$ must visit the arc $v^{-}_{d^{-}(v)} v^{+}_{1}$.
Consider the cyclic sequence of gadgets visited by $C'$.
  By the construction of $D'$, if $P(u)$ is followed by $P(v)$, then $u v\in E(D)$.
  We can thus obtain a cycle of $D$.
  Since $S$ is a feedback vertex set of $D$, it contains at least one vertex on this cycle, say $v$.
  But then $v^+_1\in S'$, a contradiction.
  
For sufficiency, suppose for contradiction that $C$ is a cycle in $D - S$.
  Let $u v$ be an arc on $C$.
  Suppose that $u$ is the $p$th in-neighbor of $v$
  and $v$ is the $q$th out-neighbor of $u$
  (in the numbering of $N^+(u)$ and $N^-(v)$ during the reduction).
  Hence, $u^{+}_{q} v^{-}_{p}$ is an arc of $D'$.
  We replace $u v$ with the path
  \[
    u^{-}_{d^{-}(u)} u^{+}_{1} u^{+}_{2} \cdots u^{+}_{q} v^{-}_{p} v^{-}_{p+1} \cdots v^{-}_{d^{-}(v)}.
  \]
  Conducting similar replacements for all the arcs on $C$ leads to a cycle $C'$ of $D'$.
  Since $S'$ is a feedback vertex set of $D'$, it contains at least one vertex on $C'$, say $v^+_1$.
  But then $v\in S$, a contradiction.

  The same reduction also works for the feedback arc set problem:
a vertex set $S\subseteq V(D)$ is a feedback vertex set of $D$ if and only if the arc set $\{v^{-}_{d^{-}(v)} v^{+}_{1}\mid v\in S\}$ is a feedback arc set of $D'$.
\end{proof}

Note that the reduction above does not start from the feedback set problems on digraphs of bounded degree.
Theorem~\ref{thm:degree-bounded} is slightly stronger than separate bounds on maximum in- and out-degrees.

\begin{corollary}Both feedback set problems remain $\NP$-complete for digraphs with indegree and outdegree both at most two.
\end{corollary}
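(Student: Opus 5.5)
The plan is to derive the corollary directly from the construction in the proof of Theorem~\ref{thm:degree-bounded}, by reading off the in- and out-degrees of $D'$. The reduction produces $D'$ of maximum degree three, and any digraph of maximum degree three automatically has indegree and outdegree at most three. So the content of the corollary is the sharper claim that no vertex of $D'$ has indegree three or outdegree three. Once this is checked, the $\NP$-completeness of both feedback set problems carries over unchanged, since the reduction itself is the same.

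The key step is a degree count on the path gadget $P(v)$. Take $v^-_i$ with $1 \le i \le d^-(v)$. It receives exactly one arc from outside the gadget, namely $x_i v^-_i$. It may also receive the path arc from its predecessor $v^-_{i-1}$ when $i \ge 2$. Hence its indegree is at most two. Its only outgoing arc is the path arc to its successor, which is $v^-_{i+1}$, or $v^+_1$ when $i = d^-(v)$, so its outdegree is at most one.

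Symmetrically, $v^+_j$ has indegree exactly one: its predecessor on the path is $v^+_{j-1}$, or $v^-_{d^-(v)}$ when $j = 1$. Its outdegree is at most two, coming from the arc $v^+_j y_j$ and possibly the path arc to $v^+_{j+1}$. In fact every vertex of $D'$ satisfies $\min\{d^-, d^+\} \le 1$ and $\max\{d^-, d^+\} \le 2$.

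Two degenerate cases need a quick check. If $d^-(v) = 0$ or $d^+(v) = 0$, the path simply has no negative or positive part, and the bounds still hold. Isolated vertices of $D$ produce empty gadgets, which can be ignored. Since the reduction and its correctness argument are taken verbatim from Theorem~\ref{thm:degree-bounded}, no further obstacle remains: the only work is this local degree inspection.
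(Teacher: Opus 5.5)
Your proof is correct, but it takes a different route from the paper's. You open up the construction of Theorem~\ref{thm:degree-bounded} and check each gadget locally: every $v^-_i$ has indegree at most two and outdegree at most one, and every $v^+_j$ has indegree at most one and outdegree at most two. So the digraph $D'$ produced by the reduction already meets the bound, with no further processing.

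The paper instead uses Theorem~\ref{thm:degree-bounded} only as a black box. In any digraph of maximum degree three, a vertex of indegree three is a sink and a vertex of outdegree three is a source, so neither lies on a directed cycle. Deleting all sources and sinks therefore leaves the optimum of both problems unchanged. Every surviving vertex had indegree and outdegree at least one and total degree at most three, so each of its degrees is at most two.

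The paper's argument is independent of how the hard instances were built and works for any family of maximum-degree-three digraphs. Its cost is a trivial preprocessing step whose safety must be noted. Your argument needs no preprocessing and shows that the reduction's output itself satisfies the stronger condition, but it depends on the specific gadget rather than on the theorem's statement alone.
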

\begin{proof}
  Vertices with indegree zero or outdegree zero lie on no directed cycle and may therefore be deleted.  In the remaining digraph, every degree-3 vertex has indegree/outdegree pattern $(2,1)$ or $(1,2)$, so both indegree and outdegree are at most two.
\end{proof}

\section{Planar digraphs of bounded degree} \label{PDFVS}

A sharp-eyed reader may have noticed that the gadget in Figure~\ref{fig:degree-reduction} does preserve planarity.
However, this happens only when the arcs incident to the vertex $v$ are embedded in a certain way.
A plane embedding of a digraph~$D$ is \emph{bipolar} if, around every vertex, the incoming arcs appear consecutively in the circular order.  Equivalently, the outgoing arcs are also consecutive.
Note that when applied to such an embedded digraph, the splitting operation ends with a planar digraph.
 
\begin{theorem}\label{thm:bipolar-embeddings}
  The feedback vertex set problem can be solved in polynomial time in planar digraphs that admit a bipolar embedding.
\end{theorem}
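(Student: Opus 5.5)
The plan is to reduce to the feedback arc set problem on planar digraphs, which is polynomial-time solvable by Lucchesi and Younger~\cite{lucchesi-76-thesis, lucchesi-78-minimax}. The reduction is the splitting operation, which is planar when we start from a bipolar embedding. One caveat on the statement: it does not say whether a bipolar embedding is supplied with the input. I would first argue that one can be found in polynomial time, e.g.\ by the standard device of attaching a new vertex $h_v$ adjacent to all arcs at $v$ with the incoming ones grouped (equivalently, testing planarity of the graph obtained by splitting $v$ into $v^-,v^+$ joined by an edge, since bipolarity at $v$ is exactly planarity after contracting such an edge). Concretely, construct $D^\ast$ by splitting every vertex $v$ into $v^-$ and $v^+$ with the arc $v^-v^+$. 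Then $D$ has a bipolar embedding if and only if the underlying graph of $D^\ast$ is planar. Given $D$ itself planar with a bipolar embedding, we could also take the embedding directly.

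Step one is planarity of $D^\ast$. Around $v$ in the bipolar embedding, the incoming arcs form an interval of the rotation and the outgoing arcs form the complementary interval. Inside a small disk around $v$, place $v^-$ near the incoming interval and $v^+$ near the outgoing interval, and draw the arc $v^-v^+$ between the two intervals. No crossings arise because the intervals are contiguous. Doing this at every vertex independently yields a plane drawing of $D^\ast$.

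Step two is the equivalence of solutions, as in the Lawler--Karp argument recalled in the introduction. Directed cycles of $D$ correspond bijectively to directed cycles of $D^\ast$: each visit to $v$ becomes the traversal $v^-\to v^+$. Hence, if $S$ is a feedback vertex set of $D$, then $\{v^-v^+ : v\in S\}$ is a feedback arc set of $D^\ast$ of the same size. Conversely, take a feedback arc set $F$ of $D^\ast$. Every arc $u^+v^-$ of $F$ can be replaced by $v^-v^+$, because every cycle through $u^+v^-$ continues along $v^-v^+$ (the vertex $v^-$ has outdegree one). This gives a feedback arc set of no larger size consisting only of splitting arcs, which maps back to a feedback vertex set of $D$. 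So the minimum sizes coincide, and a minimum solution of one transfers to the other in polynomial time.

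Step three is to run the Lucchesi--Younger algorithm for minimum feedback arc set on the planar digraph $D^\ast$, which is polynomial, and translate the result back. The main obstacle is only the algorithmic detail of obtaining the bipolar embedding when it is not given. I would handle it by the planarity-test characterization above, and note that we in fact need only the planarity of $D^\ast$, not the embedding, since the feedback arc set algorithm on planar digraphs works from the graph alone.
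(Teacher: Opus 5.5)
Your proposal is correct and is essentially the paper's proof: split every vertex, use bipolarity to place $v^-$ and $v^+$ locally without crossings, transfer solutions via $S \mapsto \{v^-v^+ : v\in S\}$, and run the Lucchesi--Younger algorithm on the planar split digraph. Your exchange argument for the converse (replacing $u^+v^-$ by $v^-v^+$) spells out what the paper only cites from Karp, and your handling of an unsupplied embedding matches the paper's remark after the proof; strictly speaking, the planar feedback arc set algorithm goes through the dual and so does use an embedding, but one can be computed in linear time.
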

\begin{proof}
  Let $D$ be a planar digraph that admits a bipolar embedding.
Let $D'$ denote the resulting digraph obtained from~$D$ by the splitting operation.
  Because incoming and outgoing arcs are consecutive around each vertex, the split vertices~$v^{-}$ and~$v^{+}$ of~$v$ can be inserted locally without creating crossings.
Then a subset $S\subseteq V(D)$ is a feedback vertex set of $D$ if and only if $\{v^{-}v^{+} \mid v\in S\}$ is a feedback arc set of $D'$~\cite{karp-72-reduction}.
  Since $D'$ is planar, the feedback arc set problem can be solved in polynomial time. 
  The statement follows.
\end{proof}

It may be worth mentioning that the algorithm does not require a bipolar embedding as part of the input.
It suffices to apply the splitting operation and then find a plane embedding of the resulting digraph.
A planar digraph $D$ with $\symDegree(D) \le 1$ trivially admits a bipolar embedding: for every vertex $v$, either $d^{-}(v) \le 1$ or $d^{+}(v) \le 1$. 

\begin{corollary}\label{cor:planar-digraph}
  The feedback vertex set problem can be solved in polynomial time on planar digraphs~$D$ with~$\symDegree(D) \le 1$.
\end{corollary}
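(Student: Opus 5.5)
The plan is to reduce the corollary directly to Theorem~\ref{thm:bipolar-embeddings}. It suffices to show that every planar digraph $D$ with $\symDegree(D)\le 1$ admits a bipolar embedding. In fact, we will show that \emph{every} plane embedding of $D$ is already bipolar, so no re-embedding is needed.

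Fix any plane embedding of $D$ and any vertex $v$. Since $\symDegree(D)\le 1$, either $d^-(v)\le 1$ or $d^+(v)\le 1$.

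\begin{itemize}
\item Suppose $d^-(v)\le 1$. The incoming arcs at $v$ form a set of size at most one. Such a set is trivially consecutive in the circular order of arcs around $v$ (the empty set and a singleton are both intervals of the cyclic order). Consequently its complement, the set of outgoing arcs, is also consecutive.
\item The case $d^+(v)\le 1$ is symmetric: the outgoing arcs are consecutive, hence so are the incoming arcs.
\end{itemize}

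Parallel or antiparallel arcs cause no difficulty here, since each arc occupies its own position in the rotation at $v$. Thus the embedding satisfies the bipolar condition at every vertex, and Theorem~\ref{thm:bipolar-embeddings} yields the polynomial-time algorithm.

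Algorithmically, following the remark after Theorem~\ref{thm:bipolar-embeddings}, the procedure is:
\begin{enumerate}
\item Apply the splitting operation to $D$ to obtain $D'$. By the argument above, $D'$ is planar.
\item Compute a plane embedding of $D'$ in linear time.
\item Run the Lucchesi--Younger algorithm for feedback arc set on the planar digraph $D'$.
\item Translate the solution back: each arc $v^-v^+$ in the feedback arc set corresponds to the vertex $v$. A solution arc of $D'$ that is not of the form $v^-v^+$ can be replaced by one that is: an original arc $u^+v^-$ can be exchanged for $v^-v^+$, since every cycle through $u^+v^-$ continues along $v^-v^+$.
\end{enumerate}

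There is no real obstacle. The only point needing care is the observation that a set of at most one element is always cyclically consecutive, which makes the bipolar condition automatic.
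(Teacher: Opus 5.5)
Your proposal is correct and matches the paper's argument: at each vertex either the incoming or the outgoing arcs form a set of size at most one, which is trivially consecutive, so every plane embedding is bipolar and Theorem~\ref{thm:bipolar-embeddings} applies. Your extra step, which converts an arbitrary minimum feedback arc set of $D'$ into one consisting only of arcs $v^-v^+$, is a valid detail that the paper leaves implicit in its appeal to the splitting reduction.
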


Since~$\Delta(D) \le 3$ implies $\symDegree(D) \le 1$, the next statement follows immediately from Corollary~\ref{cor:planar-digraph}.

\begin{corollary}
  The feedback vertex set problem can be solved in polynomial time on planar digraphs~$D$ with $\Delta(D) \le 3$.
\end{corollary}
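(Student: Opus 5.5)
The plan is to derive this statement directly from Corollary~\ref{cor:planar-digraph}. It suffices to check that every digraph $D$ with $\Delta(D)\le 3$ satisfies $\symDegree(D)\le 1$. Planarity is assumed in both statements, so nothing else needs to be verified.

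The key step is a one-line degree count. For any vertex $v$ we have $d^-(v)+d^+(v)=d(v)\le 3$. Hence $\min\{d^-(v),d^+(v)\}\le \lfloor 3/2\rfloor = 1$, since two nonnegative integers that are both at least $2$ would sum to at least $4$. Taking the maximum over all $v\in V(D)$ gives $\symDegree(D)\le 1$.

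With this in hand, Corollary~\ref{cor:planar-digraph} applies verbatim. Unwinding it, the algorithm is as follows. Such a $D$ trivially admits a bipolar embedding, because around any vertex a single incoming arc, or a single outgoing arc, is automatically consecutive with itself. Therefore the splitting operation yields a planar digraph $D'$. Theorem~\ref{thm:bipolar-embeddings} then reduces the problem to feedback arc set on $D'$, which is polynomial by Lucchesi and Younger.

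I expect no real obstacle. The only point worth a remark is the treatment of parallel arcs and $2$-cycles, since $d(v)$ counts arcs with multiplicity. The inequality above is purely numerical and is unaffected by multiplicity, so the argument goes through unchanged. As an alternative sanity check, one could first delete sources and sinks as in Proposition~\ref{lem:digraph-maximum-two}. Every remaining vertex then has indegree or outdegree exactly one, and the feedback vertex set and feedback arc set problems coincide; this gives the same conclusion.
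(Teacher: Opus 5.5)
Your proposal is correct and follows the paper's proof: the paper derives this statement directly from Corollary~\ref{cor:planar-digraph} via the observation that $\Delta(D)\le 3$ forces $\min\{d^-(v),d^+(v)\}\le 1$ at every vertex, i.e., $\symDegree(D)\le 1$. Your alternative sanity check, which deletes sources and sinks and then identifies the vertex and arc versions, is the informal argument the paper sketches in its introduction.
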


In the rest, we show that both corollaries are tight.
We say that a digraph $D$ is \emph{$3$-regular} if $d^-(v) = d^+(v) = 3$ for every vertex $v\in V(D)$.
Since the {vertex cover} problem remains $\NP$-complete on planar cubic graphs~\cite{garey-77-rectilinear-steiner-tree}, the reduction by the doubling operation implies that the feedback vertex set problem remains $\NP$-complete on $3$-regular planar digraphs.
A natural attempt is to devise planar gadget(s) \`a la Speckenmeyer to replace the vertices one by one.
However, we have to respect the cyclic ordering of the arcs incident to each embedded vertex, which is not always easy. 
In particular, it seems very unlikely if the incoming and outgoing arcs incident to a vertex alternate in the embedding.

If we use ``$-$'' and ``$+$'' to denote an incoming arc and an outgoing arc, respectively, then the cyclic ordering of the six arcs incident to any vertex can be written as $(-, -, -, +, +, +)$ in a bipolar embedding.  Likewise, we write $(-, +, -, +, -, +)$ if they alternate.
Up to reversal and cyclic symmetry, there are four possible circular orders of three incoming and three outgoing arcs:
\[
  (-, -, +, +, -, +)  \text{ and } (+, +, -, -, +, -).
\]
Note that they are symmetric.
We say that a plane embedding of a $3$-regular planar digraph~$D$ is \emph{irregular} if for each vertex $v \in V(D)$, the arcs incident to $v$ are in the circular order of either $(-, -, +, +, -, +)$ or $(+, +, -, -, +, -)$.  
Interestingly, we can always construct an irregular embedding for a $3$-regular planar digraph obtained by the doubling operation.
Note that this is not true for general $3$-regular planar digraphs; a classic result states that if the underlying graph is three-connected, the digraph admits an essentially unique embedding.
The trick here is that we can freely switch the order of any pair of opposite arcs between the same pair of vertices.

\begin{figure}[h]
  \centering
  \begin{subfigure}[b]{.25\linewidth}
    \centering
    \begin{tikzpicture}[scale = 1.5]
      \foreach \x/\l in {-1/x, 1/y} {
        \foreach \i in {0, 1} 
        \node[empty vertex] (\l\i) at (1*\x, 1.5*\i) {};
        \node[empty vertex] (\l2) at (.5*\x, .75) {};
        \draw (\l 1) -- (\l 2) -- (\l 0) -- (\l 1);
        \draw[thick] (\l 0) -- (\l 2);
      }
      \foreach \i in {0, 1} {
        \draw[thick] (x\i) -- (y\i);
      }
      \draw (x2) -- (y2);

\draw[ultra thin, white, bend right=10] (x0) edge (y0);      
    \end{tikzpicture}
    \caption{}
  \end{subfigure}  
  \quad
  \begin{subfigure}[b]{.25\linewidth}
    \centering
    \begin{tikzpicture}
      [
      scale = 1.5,
> = {Stealth[width = 3.7pt, length = 6pt]},
      ]
      \foreach \x/\l in {-1/x, 1/y} {
        \foreach \i in {0, 1} 
        \node[empty vertex] (\l\i) at (1*\x, 1.5*\i) {};
        \node[empty vertex] (\l2) at (.5*\x, .75) {};
        \foreach \i in {0} {
\draw[->, bend right = 10] (\l 0) edge (\l 1) (\l 1) edge (\l 0);
          \draw[->, bend right = 10] (\l 1) edge (\l 2) (\l 2) edge (\l 1);
          \draw[thick, ->, bend left=10] (\l 0) edge (\l 2) (\l 2) edge (\l 0);
        }
      }
      \foreach \l in {x, y} \node[filled vertex] at (\l 0) {};
      \foreach \i in {0, 1} {
        \draw[thick, ->, bend left = 10] (x\i) edge (y\i) (y\i) edge (x\i);
      }
      \draw[->, bend right = 10] (x2) edge (y2) (y2) edge (x2);
    \end{tikzpicture}
    \caption{}
  \end{subfigure}  
  \caption{(a) An undirected planar cubic graph $G$, where the edges in $F$ are represented as thick lines, and (b) an irregular embedding of the digraph obtained from $G$ by the doubling operation.
Note that the two shadowed vertices have $(-, -, +, +, -, +)$ embedding, while all the others $(+, +, -, -, +, -)$.
  }
  \label{fig:irregular-embedding}
\end{figure}

\begin{lemma}\label{lem:irregular-embedding}
  Let $G$ be an undirected planar cubic graph.
  The digraph $D$ obtained by the doubling operation admits an irregular embedding, and such an embedding can be constructed in polynomial time.
\end{lemma}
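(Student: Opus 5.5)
The plan is to fix a plane embedding of $G$ and draw each edge $uv$ as a thin lens made of the two opposite arcs $uv$ and $vu$. The embedding of the underlying multigraph is then fixed. The only freedom left is, for each edge independently, which of the two arcs is drawn on which side of the lens. Equivalently, we choose whether the $2$-cycle formed by the lens runs clockwise or anticlockwise. Such a drawing is always planar, so the whole task is to make these choices so that every vertex sees one of the allowed patterns.

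\textbf{Step 1: the local effect of one lens.} Around a vertex $u$, the lens of an edge $uv$ contributes two consecutive positions in the circular order: one outgoing arc and one incoming arc. Read clockwise around $u$, they appear either as $(-,+)$ or as $(+,-)$, and this is determined by the orientation of the $2$-cycle. I will check that the reading at the other end $v$ is the same. The argument is that a rotation by $180^\circ$ of the lens swaps $u$ and $v$ and preserves the clockwise/anticlockwise orientation of the $2$-cycle. Hence each edge receives a well-defined label in $\{(-,+),(+,-)\}$, i.e., a $2$-colouring of $E(G)$, and the two choices per edge are independent.

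\textbf{Step 2: reduce to a colouring condition.} At a cubic vertex the circular order is the concatenation of the three pairs. If all three edges have the same label, we get the alternating order $(-,+,-,+,-,+)$. If the labels are mixed, say $(-,+),(-,+),(+,-)$, we get $-,+,-,+,+,-$, which read cyclically from the fourth position is $(+,+,-,-,+,-)$. The symmetric mixed case gives $(-,-,+,+,-,+)$. So the embedding is irregular if and only if no vertex has all three incident edges of the same colour. This is equivalent to finding a spanning subgraph $F\subseteq E(G)$ with $\deg_F(v)\in\{1,2\}$ for every $v$, which is the set $F$ of Figure~\ref{fig:irregular-embedding}. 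We then label $F$ by $(-,+)$ and the remaining edges by $(+,-)$.

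\textbf{Step 3 (main obstacle): existence and construction of $F$.} A perfect matching would suffice, but by Petersen's theorem it is guaranteed only for bridgeless graphs, and $G$ may have bridges. Instead I will use Lovász's $(g,f)$-factor theorem with $g\equiv 1$ and $f\equiv 2$. Because $g<f$ at every vertex, there are no odd-component terms, and the condition reduces to
\[
2|S|-|T|+\sum_{v\in T}\deg_{G-S}(v)\ge 0
\]
for all disjoint $S,T\subseteq V(G)$.

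To verify it, note that
\[
\sum_{v\in T}\deg_{G-S}(v)\ge 3|T|-e(S,T),
\]
and that $e(S,T)\le 3\min\{|S|,|T|\}$ by cubicity. Hence the left-hand side is at least $2|S|+2|T|-3\min\{|S|,|T|\}\ge 0$. Such a factor can be found in polynomial time by the standard reduction of degree-constrained subgraph problems to matching.

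A fallback I would try first, if I want something fully elementary, is local search. Repeatedly pick a monochromatic vertex and flip one of its edges whose other endpoint does not become monochromatic. The obstacle there is the case where all three choices fail, so I prefer the factor argument. With $F$ in hand, Steps 1 and 2 produce the irregular embedding in polynomial time.
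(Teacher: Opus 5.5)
Your proof is correct. The framework is the paper's: draw each edge as a lens and orient its $2$-cycle according to membership in a set $F$ with $\deg_F(v)\in\{1,2\}$. The key existence step is different.

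\begin{itemize}
\item The paper takes $F$ to be one of two linear forests covering $E(G)$, cited from Duncan et al. Then $1\le \deg_F(v)\le 2$ is immediate: each forest has maximum degree at most two, and together the two forests cover the three edges at $v$. This gives a linear-time construction, but it rests on a nontrivial decomposition theorem, and the acyclicity of $F$ is never actually used.
\item You isolate exactly what is needed: a $(1,2)$-factor, equivalently a $2$-edge-colouring with no monochromatic vertex. You certify it with Lovász's $(g,f)$-factor theorem, where the odd-component term correctly vanishes since $g<f$. The verification $2|S|+2|T|-e(S,T)\ge 2\max\{|S|,|T|\}-\min\{|S|,|T|\}\ge 0$ is a one-liner. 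The cost is only polynomial rather than linear time, via the general reduction of degree-constrained subgraphs to matching.
\end{itemize}

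Your Steps 1 and 2 also make explicit two things the paper leaves implicit: the label of a lens reads the same at both endpoints, and the irregularity condition is an equivalence rather than just a sufficient condition.

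If you want the elementary, linear-time route you were reaching for with local search, the classical Euler-tour argument gives $F$ directly with no citation. Add one new vertex per component, adjacent to every vertex of that component. Take an Euler tour starting at the new vertex and colour its edges alternately. Each original vertex is then passed through twice, and the pass that avoids the new vertex already shows it both colours in $G$.
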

\begin{proof}
  We fix an arbitrary plane embedding of $G$.
  According to Duncan et al.~\cite{compgeomDuncanEK04}, $E(G)$ can be covered by two linear forests (the union of disjoint paths), and they can be found in linear time.
  Let~$F$ be one of the linear forests.
  We do the doubling operation as follows.
  For each edge~$e\in E(G)$, we replace~$e$ by a pair of opposite arcs. We choose their local order so that the resulting 2-cycle is clockwise exactly when~$e\in F$ (see Figure~\ref{fig:irregular-embedding}).
  Let $D$ denote the resulting digraph, which is clearly a planar digraph: the construction actually ends with an embedding.
  To see that the embedding is irregular, note that each vertex $v$ is incident to one or two edges in $F$.
  If all three edges incident to $v$ belong to $F$, then $F$ is not a linear forest; if all of them are disjoint from $F$, then another linear forest cannot cover the three edges.
  If $v$ is incident to exactly one edge from $F$, then the embedding is $(+, +, -, -, +, -)$; it is $(-, -, +, +, -, +)$ otherwise.
  This concludes the proof.
\end{proof}

\begin{figure}[h]
  \centering
  \begin{subfigure}[b]{.27\linewidth}
    \centering
    \begin{tikzpicture}
      [
      > = {Stealth[width = 3.7pt, length = 6pt]},
      ]
      \def\radius{1.5}
      \def\n{20}
      \node[filled vertex, "$v$"] (v) at (0, 0) {};
      \foreach[count=\i] \a in {12, 13, 0} {
        \pgfmathsetmacro{\angle}{- \a * (360 / \n)}        
        \node[empty vertex] (x\i) at (\angle:\radius) {};
        \draw[->] (x\i) -- (v);
        \node at ({\angle}:{\radius+.3}) {$x_\i$};
      }
      \foreach[count=\i] \a in {17, 18, 10} {
        \pgfmathsetmacro{\angle}{- \a * (360 / \n)}        
        \node[empty vertex] (y\i) at (\angle:\radius) {};
        \draw[->] (v) -- (y\i);
        \node at ({\angle}:{\radius+.3}) {$y_\i$};
      }
      \node at (0, -1) {};
    \end{tikzpicture}
    \caption{}
  \end{subfigure}  
  \quad    
  \begin{subfigure}[b]{.45\linewidth}
    \centering
    \begin{tikzpicture}
      [
      > = {Stealth[width = 3.7pt, length = 6pt]},
      ]
      \draw[fill = gray!30, draw = gray, rounded corners] (-3, 2.2) rectangle (3, -1.5);

      \node[selected solution, fill=RubineRed, "$v_c$" below] (c) at (0, 0) {};
      \foreach [count=\i] \x/\l/\s in {-1/x/-, 1/y/+} {
        \node[empty vertex, "$v^{\s}_{a}$"] (\l-1) at (\x*1.5, 1.5) {};
        \node[empty vertex, "$v^{\s}_{12}$"] (\l-2) at (\x*2.5, 1.5) {};
        \node[empty vertex] (\l-3) at (\x*1.5, 0) {};
        \node[empty vertex] (\l-4) at (\x*.75, 0.75) {};
        \node[empty vertex] (\l-5) at (\x*-2.5, -.75) {};
      }
      \foreach \from/\to in {x-2/x-1, x-3/x-1, x-3/c, c/x-4, x-4/x-3, x-1/y-5, y-5/x-3,
        x-1/y-1, x-5/y-5, 
        y-1/y-2, y-3/x-5, x-5/y-1, y-1/y-3, c/y-3, y-4/c, y-3/y-4} {
        \draw[->] (\from) -- (\to);
      }
      \foreach \x/\s in {y/+, x/-} {
        \node[empty vertex, "$v^{\s}_{3}$" below] at (\x-5) {};
      }      
      \foreach \x in {-1, 1} {
        \draw[->, dashed] (y-2) -- ++(1, .2*\x);
        \draw[<-, dashed] (x-2) -- ++(-1, .2*\x);
      }
      \draw[->, dashed] (y-5) -- ++(-1, 0);
      \draw[<-, dashed] (x-5) -- ++(1, 0);
      
      \foreach \x in {x-1, x-5} {
        \node[selected solution, fill=RubineRed] at (\x) {};
      }
      \foreach \x/\l in {x/b, y/d} {
        \node[unselected solution, fill=cyan, "$v_{\l}$" below] at (\x-3) {};
        \node["$v'_{\l}$"] at (\x-4) {};
      }
    \end{tikzpicture}
    \caption{}
  \end{subfigure}  
  \caption{(a) A vertex $v$ with $(-, -, +, +, -, +)$ embedding, and (b) the gadget for $v$.}
  \label{fig:gadget-planar-dfvs}
\end{figure}

An irregular embedding allows us to group the pair of consecutive incoming arcs and the pair of outgoing arcs for the gadget.  In Figure~\ref{fig:gadget-planar-dfvs},  the gadget has two ``entry'' vertices $v^-_{12}, v^-_{3}$ and two ``exit'' vertices $v^+_{12}, v^+_{3}$.  It is designed so that every feedback vertex set contains either the pair $\{v_{b}, v_{d}\}$ or at least three vertices corresponding to selecting~$v$.

\begin{theorem} \label{PDFVS：NPC}
  The feedback vertex set problem remains $\NP$-complete for planar digraphs with indegree and outdegree both at most two.
\end{theorem}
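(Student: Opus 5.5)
We reduce from vertex cover on planar cubic graphs, which is $\NP$-complete~\cite{garey-77-rectilinear-steiner-tree}. Given a planar cubic graph $G$ and an integer $k$, we first apply the doubling operation. By Lemma~\ref{lem:irregular-embedding}, we obtain in polynomial time a $3$-regular planar digraph $D$ with an irregular embedding. A set is a vertex cover of $G$ if and only if it is a feedback vertex set of $D$, so $G$ has a vertex cover of size at most $k$ exactly when $D$ has a feedback vertex set of size at most $k$. Up to reversing all arcs at a vertex, which only mirrors the gadget, every vertex $v$ then has the local pattern $(-,-,+,+,-,+)$ of Figure~\ref{fig:gadget-planar-dfvs}(a). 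Number its in-neighbors $x_1,x_2,x_3$ and out-neighbors $y_1,y_2,y_3$ so that $x_1,x_2$ are consecutive, $y_1,y_2$ are consecutive, and $x_3,y_3$ are the singletons.

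We replace every vertex $v$ by the gadget of Figure~\ref{fig:gadget-planar-dfvs}(b). The arcs from $x_1,x_2$ enter $v^-_{12}$ and the arc from $x_3$ enters $v^-_3$. The arcs to $y_1,y_2$ leave $v^+_{12}$ and the arc to $y_3$ leaves $v^+_3$. All attachment points lie on the outer face of the gadget in the order required by the irregular pattern, so the resulting digraph $D'$ is planar. Inspecting the figure, every gadget vertex has indegree and outdegree at most two. This holds because $v^-_{12}$ receives two external arcs and sends one, $v^+_{12}$ symmetrically, and every internal vertex has degree at most four with in/out at most two. The target bound is $k' = k + 2|V(G)|$.

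The core is a gadget lemma with three parts. (i) Every feedback vertex set of $D'$ contains at least two vertices of each gadget, since the internal cycles $v_b v'_b v_c$-type triangles and the cycles through $v^-_a$ require this; the pair $\{v_b,v_d\}$ is one such pair. (ii) Removing $\{v_b,v_d\}$ breaks all internal cycles, yet every entry still reaches every exit, so the gadget behaves like the original vertex $v$ left unselected. (iii) Removing the three vertices $\{v_c, v^-_a, v^-_3\}$ (the red ones) kills every entry-to-exit path as well as all internal cycles, so $v$ behaves as deleted. I would verify (ii) and (iii) by listing paths in the small gadget. A further claim is that any solution using exactly two vertices inside a gadget can be exchanged for $\{v_b,v_d\}$ without losing feedback-ness, and any solution using three or more can be exchanged for the red triple. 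This normalization lets us map solutions back and forth.

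With the lemma in hand, correctness follows as in Theorem~\ref{thm:degree-bounded}. Suppose $S$ is a feedback vertex set of $D$. Take the red triple for $v\in S$ and $\{v_b,v_d\}$ otherwise. Any surviving cycle lies in no single gadget by (ii) and (iii). It passes through unselected gadgets and so projects to a cycle of $D-S$, a contradiction. This gives a solution of size $|S|+2|V(G)|$. Conversely, normalize a solution of size at most $k'$. Let $S$ be the set of vertices whose gadgets carry three or more solution vertices; then $|S|\le k$. A cycle of $D-S$ lifts via (ii) to a cycle of $D'$, since entry-to-exit connectivity holds for every ordered pair. The main obstacle is the exchange argument in the normalization, since a two-vertex local choice other than $\{v_b,v_d\}$ might block some entry-exit pairs while preserving others. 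The first check is that every two-element internal hitting set must be exactly $\{v_b,v_d\}$, which would dissolve the issue. Otherwise we argue that a partially blocked gadget can be upgraded to the red triple at no extra cost, because it already needs a third vertex.
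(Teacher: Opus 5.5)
Your proposal is essentially the paper's proof: the same doubled instance with the irregular embedding of Lemma~\ref{lem:irregular-embedding}, the same gadget $H(v)$, the same $k' = 2n+k$, the same forward solution, and the same backward map $S=\{v : |S'\cap V(H(v))|\ge 3\}$. The check you leave open is exactly what the paper uses implicitly, and it holds: $S'$ must meet the vertex-disjoint triangles $v^-_a v^+_3 v_b$ and $v^-_3 v^+_a v_d$ (this is the real reason for your part (i)), and since $v_b v_c v'_b$ meets their union only in $v_b$ and $v_c v_d v'_d$ meets it only in $v_d$, the only two-vertex subset of $H(v)$ hitting all four triangles is $\{v_b, v_d\}$. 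So no normalization is needed, and you should discard your fallback ``upgrade'' argument, which is not valid: a gadget carrying exactly two solution vertices would not already need a third.
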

\begin{proof}
  Let $D$ be the digraph derived from an undirected planar cubic graph by the doubling operation, and let~$k$ be an integer.
  Note that it is $\NP$-complete to decide whether $(D, k)$ is a yes-instance of the feedback vertex set problem.
  We use Lemma~\ref{lem:irregular-embedding} to construct an irregular embedding of $D$.
  For each vertex $v\in V(D)$ with embedding $(-, -, +, +, -, +)$, we replace $v$ by the gadget $H(v)$, which consists of 11 vertices
  \[
    v^-_{12}, v^+_{12}, v^-_{3}, v^+_{3}, v^-_{a}, v^+_{a}, v_{b}, v'_{b}, v_{c}, v_{d}, v'_{d},
  \]
  and the following $16$ arcs:
  \begin{itemize}
\item an arc $v^-_{3} v^+_{3}$, a directed path $v^-_{12} v^-_{a} v^+_{a} v^+_{12}$; and
\item four directed triangles $v_b v_c v'_b$, $v_c v_d v'_d$, $v^-_a v^{+}_3 v_{b}$, and $v^{-}_3 v^{+}_a v_{d}$.
  \end{itemize}
Finally, we reassign the arcs incident to $v$ as
\[
    x_1 v^-_{12}, x_2 v^-_{12}, v^+_{12} y_1, v^+_{12} y_2, x_3 v^-_{3}, \text{ and }v^+_{3} y_3.
  \]
  See Figure~\ref{fig:gadget-planar-dfvs} for an illustration.
  We replace each vertex $v$ in $V(D)$ with embedding $(+, +, -, -, +, -)$ in a symmetric manner.
  Let $D'$ denote the resulting digraph.
  Note that $D'$ is a planar digraph.

  Let $n = |V(D)|$ and $k' = 2n + k$. 
  We argue that $(D, k)$ is a yes-instance of the feedback vertex set problem if and only if $(D', k')$ is.

  For necessity, suppose that $S$ is a feedback vertex set of $D$.
  We claim that
  \[
    S' = \{ v_b, v_d \mid v \in V(D) \setminus S \} \cup \{ v^-_{a}, v_c, v^-_{3} \mid v \in S \}
  \]
  is a feedback vertex set of $D'$.
  Note that 
  \[
    |S'| = 2|V(D) \setminus S| + 3|S| = 2|V(D)| + |S| = 2n + k.
  \]
  After removing either $\{v_{b}, v_{d}\}$ or $\{v^-_{a}, v_c, v^-_{3}\}$ from the gadget $H(v)$, no directed cycle can visit the remaining vertices from $H(v)$.  
  Thus, $D' - S'$ is acyclic.

  For sufficiency, suppose that $S'$ is a feedback vertex set of $D'$.
  We claim that
  \[
    S = \{v\mid S' \text{ contains more than two vertices from } H(v)\}
  \]
  is a feedback vertex set of $D$.
  Since the triangles $v^-_a v_{b} v^{+}_3$ and $v^{-}_3 v^{+}_a v_{d}$ are disjoint, $S'$ contains at least two vertices from each gadget.
  Thus, each gadget corresponding to a vertex in~$S$ contributes at least three selected vertices, whereas every other gadget contributes at least two, 
  \[
    |S'| \ge 3 |S| + 2 |V(D)\setminus S| = 2 n + |S|,
  \]
  and hence $|S|\le k$.

Suppose for contradiction that $C$ is a cycle in $D - S$.
  Let $u v w$ be a sub-path of $C$ (possibly $u = w$).
  Since $v\not\in S$, the feedback vertex set $S'$ contains $v_b$ and $v_d$ and no other vertex in $H(v)$.
  Then there are paths from $v^-_{i}$ to $v^+_{j}$ in $H(v)$ for all pairs of $i, j\in \{12, 3\}$ via one of the following
  \begin{itemize}
  \item $v^-_{12} v^-_{a} v^+_{a} v^+_{12}$;
  \item $v^-_{12} v^-_{a} v^+_{3}$;
  \item $v^-_{3} v^+_{a} v^+_{12}$; or
  \item $v^-_{3} v^+_{3}$.
  \end{itemize}
  Thus, each arc of~$C$ can be simulated by an internal directed path through the corresponding gadget, and we can concatenate them to form a cycle of $D' - S'$, a contradiction.
  The proof is now complete.
\end{proof}

\section{The connected feedback vertex set problem} \label{PCFVS}

The connected feedback vertex set problem has been studied primarily for undirected graphs, and we also restrict attention to that setting.
Again, it is trivial to solve this problem in polynomial time when the maximum degree is two.
On the other hand, since the connected vertex cover problem remains $\NP$-hard even for undirected planar graphs with maximum degree four~\cite{garey-77-rectilinear-steiner-tree}, the same reduction (by doubling edges) implies the $\NP$-hardness of the connected feedback vertex set problem on undirected planar graphs of maximum degree eight.
Bodlaender et al.~\cite{comgeoBodlaenderFGPSW09} showed that the weighted version of the connected feedback vertex set problem is $\NP$-hard on planar graphs of maximum degree four (even only two different weights allowed).

\begin{figure}[h]
  \centering
  \begin{subfigure}[b]{.28\linewidth}
    \centering
    \begin{tikzpicture}
        \def\radius{1}

        \foreach[count=\x from 0] \l in {u, v} {
          \pgfmathsetmacro{\shift}{2*\x}        
          \begin{scope}[xshift=\shift cm]
            \node[empty vertex] (\l) at(0, 0) {$\l$};
        \foreach \i in {1, ..., 4} {
          \pgfmathsetmacro{\angle}{-\i * (360 / 4)}
          \draw (\l) -- node[midway, fill=white, text=gray, inner sep=1pt] {\small $\i$} (\angle:\radius);
        }
        \end{scope}
      }
      \node at (0, -1.44) {};
\end{tikzpicture}
    \caption{}
  \end{subfigure}  
  \begin{subfigure}[b]{.65\linewidth}
    \centering
    \begin{tikzpicture}
      \draw[fill = cyan!30, draw = gray, rounded corners] (1.2, -1.) rectangle (5.8, 1);
      \def\radius{.8}

      \foreach[count=\x from 0] \l in {u, v} {
        \pgfmathsetmacro{\shift}{7*\x}        
        \begin{scope}[xshift=\shift cm]
        \def\n{8}
        \node[fill = gray!30, circle, draw = gray, minimum width = 20mm] {};
        \foreach[count=\i] \a in {1, ..., 8} {
          \pgfmathsetmacro{\angle}{(-.5 - \a) * (360 / \n)}
          \node[filled vertex] (\l\i) at (\angle:\radius) {};
\node at ({\angle}:{\radius-.27}) {\small $\l_\i$};
        }
        \foreach[count=\i] \a in {1, ..., 7} {
          \pgfmathsetmacro{\j}{int(\i+1)}        
          \draw (\l\i) -- (\l\j);
        }
      \end{scope}
    }

    \foreach \x/\d/\pos in {u8/+1/right, u7/+1/right, v3/-1/left, v4/-1/left} {
      \draw (\x) -- ++ (1.85*\d, 0) node (\x 0) {};
      \foreach[count=\i] \displace in {0.8, 1.7} {
\coordinate[\pos = \displace cm of \x] (\x\i) {};
      }
    }
    \foreach \x/\l in {u71/1, u72/2, v41/{8 n}, v42/{8 n - 1}} {
      \node[filled vertex, "$e_{uv}^{\l}$"] at (\x) {};
    }
    \foreach \x/\l in {u81/{8 n}, u82/{8 n - 1}, v31/1, v32/2} {
      \node[empty vertex, "$e_{vu}^{\l}$" below] at (\x) {};
    }
    
    \draw[dashed] (u80) -- (v30) (u70) -- (v40);

    \foreach \x/\d in {u5/1, v5/1, u2/-1, v2/-1} {
      \draw[dashed] (\x) -- ++ (0, .82*\d);      
    }
    \foreach \x/\d in {u6/1, v6/1, u1/-1, v1/-1} {
      \draw[dashed] (\x) -- ++ (0, .82*\d);      
    }
    \foreach \x/\d in {v7/1, v8/1, u3/-1, u4/-1} {
      \draw[dashed] (\x) -- ++ (.82*\d, 0);
    }
    \end{tikzpicture}
    \caption{}
  \end{subfigure}
  \caption{(a) Two adjacent vertices of $G$ and edges incident to them, where the numbers on edges denote their cyclic order.
  For example, the edge $u v$ is the fourth of $u$ and the second of $v$.
  (b) The gadgets $P(u)$, $P(u v)$, and $P(v)$, which comprise $8$, $16 n$, and $8$ vertices, respectively.}
  \label{FIG: PCFVS}
\end{figure}

\begin{theorem}
  The connected feedback vertex set problem remains $\NP$-complete on undirected planar graphs of maximum degree three.
\end{theorem}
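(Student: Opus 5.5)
The plan is to reduce from the connected vertex cover problem on planar graphs of maximum degree four, which is $\NP$-hard~\cite{garey-77-rectilinear-steiner-tree}. Figure~\ref{FIG: PCFVS} suggests the gadgets, following Bodlaender et al.~\cite{comgeoBodlaenderFGPSW09}. Let $G$ be such a graph with $n$ vertices and $m$ edges, with a fixed plane embedding.

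\textbf{Construction.} Each vertex $u$ becomes a path $P(u)=u_1u_2\cdots u_8$ of degree-$\le 3$ vertices. The $i$th edge of $u$ in the cyclic order is attached at $u_{2i-1}$ and $u_{2i}$; these two attachment points are consecutive on $P(u)$. Each edge $uv$ becomes an edge gadget $P(uv)$ consisting of two parallel paths, each on $8n$ vertices:
\begin{itemize}
\item $e_{uv}^1,\dots,e_{uv}^{8n}$, joining $u_{2i-1}$ to $v_{2j}$;
\item $e_{vu}^1,\dots,e_{vu}^{8n}$, joining $v_{2j-1}$ to $u_{2i}$.
\end{itemize}
With indices chosen as in the figure, the two paths together with the subpaths $u_{2i-1}u_{2i}$ and $v_{2j-1}v_{2j}$ form a single long cycle $C_{uv}$. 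All degrees are at most three, and planarity follows from the embedding of $G$ because the two attachments for an edge are consecutive around each vertex.

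\textbf{Intended correspondence and budget.} Each edge cycle $C_{uv}$ must be hit. Hitting it inside $P(uv)$ is allowed, but the resulting vertex is connected to the rest of the solution only through long paths of length about $8n$. We therefore choose $k'=8k$ (or $8k$ plus a small slack per vertex) so that any solution using an internal edge-gadget vertex is too expensive, since connecting such a vertex costs more than $8n > 8k$. The forward direction is as follows. Given a connected vertex cover $S$ of size $k$, take $\bigcup_{u\in S}V(P(u))$. Every cycle of the new graph passes through some $C_{uv}$-like structure, or cycles in $G$ lift to cycles through vertex paths; we need to argue that any cycle uses vertices of at least one endpoint path of each traversed edge. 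Then $S$ being a cover kills all cycles. Connectivity holds because adjacent covered vertices $u,v$ have $u_{2i-1}$ and $v_{2j}$ linked via $P(uv)$?—not quite, since that requires the $8n$ path vertices. This is the main obstacle: the full paths $P(u)$ of adjacent solution vertices are not adjacent in the new graph. I would fix this by shortening one connector per edge. Replace the $e_{uv}$ path by a single edge $u_{2i-1}v_{2j}$, keep only the $e_{vu}$ side long, and add a weight-simulating pendant structure if needed. Connectivity of the solution then follows from connectivity of $S$ in $G$, and each edge cycle still contains the long path. I expect to tune the gadget lengths so that costs balance.

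\textbf{Backward direction.} Given a connected feedback vertex set $S'$ of size $\le k'$, first show that $S'$ contains no internal vertex of a long path: such a vertex would need at least $8n$ further path vertices to connect to anything else, exceeding the budget unless $S'$ is trivially small, which is excluded. Then each $C_{uv}$ must be hit at some $u_\ast$ or $v_\ast$. Next, normalize: whenever $S'\cap P(u)\neq\emptyset$, replace it by all of $P(u)$. This needs an exchange argument, where taking fewer than $8$ vertices of $P(u)$ either fails to hit some cycle or costs the same, via a counting argument with the per-vertex cost $8$. Setting $S=\{u : P(u)\subseteq S'\}$ then yields a vertex cover, connected by the connectivity of $S'$, with $|S|\le k$. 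The delicate point I expect to spend most effort on is this normalization, i.e. ensuring the $8$-vertex path cost is exactly what a partial selection is forced to pay; I would first try lengthening $P(u)$ and charging each attachment pair separately.
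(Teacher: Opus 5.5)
\textbf{There is a genuine gap, and it comes from charging the budget to the wrong objects.} With $k'=8k$ and the original gadgets, the forward direction fails, as you observe yourself: vertex gadgets are joined only through paths on $8n$ vertices. Your repair, shortening one connector per edge to a single edge, then breaks the backward direction. The normalization you flag as delicate is simply false. Once vertex gadgets are adjacent, a connected solution can pass through $P(u)$ using one to four vertices, and nothing forces it to pay $8$.

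Concretely, let $G=x_1x_2\cdots x_m$ be a path with $m\ge 5$; its minimum connected vertex cover has size $m-2$. In your modified graph, deleting the $m-1$ short connector edges leaves a tree, so every cycle uses a short edge. Take both ends of all short edges, together with the subpath (at most four vertices) of each interior $P(x_i)$ joining the two ends lying in it. This is a connected feedback vertex set of size at most $4m-6\le 8(m-3)$, so the no-instance $(G,m-3)$ becomes a yes-instance.

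Per-vertex slack only makes this worse. Pendant ``weight'' structures cannot help either, because a feedback vertex set never needs vertices that lie on no cycle. Your claim that an internal long-path vertex needs $8n$ further vertices to connect is also incorrect: $e^{i}_{uv}$ with small $i$ is cheap to connect. What is true is that in an inclusion-minimal solution each long path is taken entirely or not at all, because the outermost selected vertex of a partially selected long path is redundant.

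\textbf{The missing idea is to keep both connectors long and pay for the connections.} The paper sets $k'=8k+8n(k-1)$. For the forward direction it pads $S$ to size exactly $k$ and takes all of $P(v)$ for $v\in S$, together with the full path $P(uv)$ for each of the $k-1$ edges of a spanning tree of $G[S]$.

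For the backward direction it does not normalize vertex gadgets at all. It lets $S=\{v : S'\cap P(v)\neq\emptyset\}$. The all-or-nothing property then shows that at most $\lfloor k'/(8n)\rfloor=k-1$ long paths lie in $S'$, which uses $8k<8n$. Since distinct vertex gadgets are joined only through long paths, connectivity of $S'$ gives $|S|\le k$.

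Thus $|S|$ is encoded by the number of long connector paths. The eight-vertex gadgets exist only to keep the degree at three, and their total cost $8k<8n$ is slack that cannot buy an extra connection. That $S$ is a vertex cover follows from the pairwise disjoint cycles $C_{uv}$: since $S'$ is connected and must hit at least two of them, it must contain one of the four attachment vertices of each. That $S$ is connected follows from a separation argument on $G'$.
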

\begin{proof}
  Let $G$ be a planar graph of maximum degree at most four, and let $n = |V(G)|$.
We may assume without loss of generality that $G$ is connected, and it contains at least two edges.
For each vertex $v\in V(G)$, we introduce a path $P(v)$ on eight new vertices $v_{1}v_{2}\cdots v_{8}$, drawn around the position of $v$ in clockwise order.
  For each edge $u v\in E(G)$, we introduce two disjoint paths each on $8 n$ new vertices:
    \[
      e^1_{u v} e^2_{u v} \cdots e^{8 n}_{u v}
      ~\text{and}~
      e^1_{v u} e^2_{v u} \cdots e^{8 n}_{v u},
    \]
    denoted as~$P(uv)$ and~$P(vu)$, respectively.
    We fix an arbitrary embedding of $G$ and connect these gadgets as follows.
    Suppose the neighbors of $v$ are $x_1$, $x_2$, $\ldots$, $x_{d(v)}$ in the clockwise order, where~$d(v) \le 4$.
    For $i = 1, 2, \ldots, d(v)$, we add edges
    \[
      v_{2 i - 1} e^1_{v x_i}
      ~\text{and}~
      v_{2 i} e^{8 n}_{x_i v}.
    \]
    See Figure~\ref{FIG: PCFVS} for an illustration.

    Let $G'$ denote the resulting graph.
    It is clear from construction that $G'$ is a connected planar graph of maximum degree three. 
    We claim that $(G, k)$ is a yes-instance of the connected vertex cover problem if and only if $(G', k')$ is a yes-instance of the connected feedback vertex set problem, where
    \[
       k' = 8k + 8 n (k - 1).
    \]

    For necessity, suppose $S$ is a connected vertex cover of $G$.  We may assume that $|S| = k$: we can always add vertices to it because $n > k$.
    Since $G[S]$ is connected, we can find a spanning tree~$T$ of $G[S]$; we root it at an arbitrary vertex.
We claim that
    \[
        S' = \{ v_{1}, v_{2}, \ldots, v_{8} \mid v \in S\} \cup \{e^{1}_{u v}, e^{2}_{u v}, \ldots, e^{8 n}_{u v} \mid u \text{ is the parent of } v \}
    \]
    is a connected feedback vertex set of $G'$.
    Note that
    \[
      |S'| = 8 |S| + 8 n (|{S}| - 1) = 8k + 8 n (k - 1)= k',
    \]
    and the subgraph of $G'$ induced by $S'$ is connected, because the chosen edge gadgets correspond to a spanning tree of $G[S]$, and each such gadget attaches to the full paths $P(v)$ for its endpoints.

    Since~$S$ is a vertex cover of $G$, for every original edge~$u v$, at least one of $u,v$ belongs to~$S$.  Hence, after deleting~$S'$, each remaining component is of one of the following two forms:
    \begin{itemize}
    \item a single path~$P(v u)$ for some edge~$u v$; or
    \item the vertex gadget $P(v)$ together with some attached sub-paths~$P(v u)$.
    \end{itemize}
In both cases the component is a tree (in fact, a path or a subdivided star), and therefore acyclic.
    Thus, $S'$ is a connected feedback vertex set of $G'$.

    For sufficiency, we show that
    \[
        S = \{ v \mid S'~\text{contains at least one vertex from $P(v)$}~\}
    \]
    is a connected vertex cover of $G$ of size at most $k$.
    Any edge $uv$ is mapped to a distinct cycle
    \[
        u_{2p - 1} e^{1}_{uv} e^{2}_{uv} \cdots e^{8n}_{uv} v_{2q} v_{2q - 1} e^{1}_{vu} e^{2}_{vu} \cdots e^{8n}_{vu} u_{2p},
    \]
    where $v$ is the $p$th neighbor of $u$ and $u$ is the $q$th neighbor of $v$.
    The solution $S'$ needs to contain at least one vertex from this cycle.
    By the assumption that $G$ has at least two edges,
    there are at least two such cycles.
    Since $G'[S']$ is connected, at least one of the four vertices $u_{2p - 1}$, $u_{2p}$, $v_{2q - 1}$, $v_{2q}$ needs to be in $S'$.
    Thus, at least one of $u$ and $v$ is in $S$, and this verifies that $S$ is a vertex cover of $G$.

  Next, we show by contradiction that the subgraph $G[S]$ is connected.  Suppose that $X$ and $Y$ are two components of $G[S]$.  Let $Z$ be the set of neighbors of $X$, and let $Z'$ denote the set of vertices of $G'$ derived from $Z$, i.e., $Z' = \bigcup_{v\in Z} P(v)$.
  There must be a vertex $x' \in S'\cap P(x)$ for some $x\in X$ and a vertex $y' \in S'\cap P(y)$ for some $y\in Y$.
  By construction, $S'$ is disjoint from $Z'$.
  Note that $x'$ and $y'$ belong to different components of $G' - Z'$.  We have a contradiction.
    
    It remains to verify $|S| \leq k$.
    We argue that if $S'$ contains any vertex from $P(uv)$, say, $e^{i}_{uv}$ for some $i$ with $1\le i \le 8n$, then it contains all the vertices $e^{1}_{uv}$, $e^{2}_{uv}$, $\ldots$, $e^{8n}_{uv}$.
    Suppose for contradiction that $S'$ contains $e^{i}_{uv}$ but not $e^{i+1}_{uv}$ for some $i$ with $1 \leq i < 8n$.
    Since $G[S]$ is connected, $S'$ must contain $e^{1}_{uv}$ and its neighbor in $P(u)$ (again, we use the assumption that $G$ has at least two edges).
But this contradicts the minimality of $S'$ because $S' \setminus \{e^{i}_{uv}\}$ would remain a connected feedback vertex set of $G'$.
Therefore, the number of edges $e$ of $G$ such that $S'$ contains vertices from $P(e)$ is at most
    \[
      \left\lfloor \frac{k'}{8 n} \right\rfloor \le
      \left\lfloor \frac{8k + 8 n (k - 1)}{8 n} \right\rfloor = (k - 1) + \left\lfloor \frac{k}{n} \right\rfloor =  k - 1.
    \]
    Since $S'$ induces a connected subgraph, it can contain vertices from $P(v)$ for at most $k$ vertices $v$ of $G$.
This concludes the proof.
  \end{proof}

  \appendix
  \section*{Appendix: Historical remarks}\label{sec:historical-remarks}

Speckenmeyer uploaded a scanned copy of his thesis to \url{https://www.researchgate.net/publication/269029032} on March 12, 2015 (retrieved on May 26, 2023).
Before that, he had never published the gadgets in Figure~\ref{fig:speckenmeyer} or his reduction, except for a brief announcement at the end of his paper~\cite{speckenmeyer-88-fvs-and-nsis} (the reference numbers are updated to refer to the reference of the current paper):
\begin{quote}
    Garey and Johnson \cite{garey-79} have shown that the problem of determining a maximum nsis [nonseparating independent set] of a connected planar graph of degree at most 4 is $\NP$-hard and in \cite{speckenmeyer-83-thesis-fvs} the problem of determining a minimum fvs [feedback vertex set] for the same class of graphs is shown to be $\NP$-hard, too.
\end{quote}
Since Speckenmeyer's thesis was not widely available, this brief statement did not resolve the matter in the literature.
Indeed, this statement did not clarify the underlying construction.
The paper~\cite{speckenmeyer-88-fvs-and-nsis} was devoted to the correlation between feedback vertex sets and connected vertex covers.
With no access to the original source \cite{speckenmeyer-83-thesis-fvs}, it is natural to suppose that the claimed reduction is from the connected vertex cover problem on planar graphs of maximum degree four, which is not the case.
Due to the lack of an explicit construction, alternative results were published, even after three decades since Speckenmeyer's thesis.
In 2009, Rizzi \cite{rizzi-09-weakly-fundamental-cycle-bases} devised two gadgets for replacing degree-six vertices and degree-five vertices, respectively. 
Yet another reduction, directly from the satisfiability problem, was reported by Brandst{\"a}dt et al.~\cite[Theorem 1]{brandstadt-13-fvs-perfect}.
Both reductions of \cite{rizzi-09-weakly-fundamental-cycle-bases, brandstadt-13-fvs-perfect} differ from Speckenmeyer’s in that they do not preserve planarity.
The complexity question continued to generate discussion, evidenced by the several posts on \url{cstheory.stackexchange.com}, among others, asking for the complexity of the feedback vertex set problem on undirected graphs of maximum degree four.

After settling the complexity for graphs of maximum degree at least four, Speckenmeyer \cite{speckenmeyer-83-bound-cubic-fvs,speckenmeyer-88-fvs-and-nsis} then turned to graphs of maximum degree three.
Although his efforts did not directly resolve the problem, his insightful observations on correlating feedback vertex sets and connected vertex covers directly led to the subsequent resolution.
Note that the complement of a connected vertex cover is an independent set whose deletion does not disconnect the graph.
The main result of \cite{speckenmeyer-88-fvs-and-nsis} is the following
equality on a connected cubic graph $G$:
  \begin{equation}
  \label{eq:1}
    \mathtt{fvs}(G) = \mathtt{cvc}(G) - \frac{n}{2} + 1,
  \end{equation}
where $\mathtt{fvs}(G)$ and $\mathtt{cvc}(G)$ are, respectively, the cardinality of minimum feedback vertex sets of~$G$ and the cardinality of minimum connected vertex covers of $G$.
This connection motivated him to pose an open problem of determining the complexity of the feedback vertex set problem (or, equivalently, the connected vertex cover problem) on undirected cubic graphs.
Ueno, Kajitani, and Gotoh~\cite{ueno-88-cubic-fvs} solved the connected vertex cover problem by reducing it to the well-known matroid parity problem, which can be solved in polynomial time.\footnote{For the reader's convenience, two remarks on \cite{ueno-88-cubic-fvs} are in order.
  It \cite[Theorem~4]{ueno-88-cubic-fvs} mistakenly referred to~\cite{speckenmeyer-83-bound-cubic-fvs} for the proof of~\eqref{eq:1}.
  In \cite[Paragraph 3]{ueno-88-cubic-fvs}, the complement of a nonseparating independent set was mistakenly stated as a connected dominating set.  
}
In the same year, a similar algorithm was reported by Furst, Gross, and McGeoch \cite{furst88}, who studied the maximum genus of a graph, an essential invariant of topological graph theory.
It turns out that the maximum genus and minimum feedback vertex sets of a connected cubic graph are related.
Both algorithms are involved, and as far as we can check, there is no new algorithm for the feedback vertex set problem on cubic graphs.  A direct and simple algorithm would be interesting.

The final remark is on Speckenmeyer's gadgets.
As mentioned, we made a small change when rendering Figure~\ref{fig:speckenmeyer}(a).  The original one had a vertex $v_3$ similar to Figure~\ref{fig:speckenmeyer}(b), and the vertex $v_2$ receives only one instead of two edges accordingly.
Apart from being more symmetric, our rendition is easier to adapt in later constructions.  One may contrast it with Figure~\ref{fig:gadget-planar-dfvs}(b).
Further, the gadget in Figure~\ref{fig:speckenmeyer}(a) can be revised to any vertex $v$ with $d(v)\ge 6$: we assign $\lfloor {d(v) - 2 \over 2} \rfloor$ consecutive neighbors to $v_2$ and $\lceil {d(v) - 2 \over 2} \rceil$ consecutive neighbors to $v_4$.
Note the maximum degree of the gadget is
\[
  \left\lceil {d(v) - 2 \over 2} \right\rceil + 3 \le {d(v) - 1 \over 2} + 3 \le {d(v) + 5 \over 2} < d(v).
\]
Repetitively applying these reductions will eventually bring the maximum degree to five, and then the gadget in Figure~\ref{fig:speckenmeyer}(b) will complete the reduction.
Therefore, the result can be achieved without using \cite{garey-77-rectilinear-steiner-tree}.
Chen, Kanj, and Xia~\cite{chen-09-complexity-of-special-instances} used a similar idea to prove the hardness of the vertex cover on graphs of maximum degree three.

\bibliographystyle{plainurl}

\end{document}